\newtheorem{theorem}{Theorem}
\newtheorem{definition}{Definition}
\newtheorem{lemma}{Lemma}
\title{Data-Driven Robust Control Using Reinforcement Learning}
\date{}
\author[12]{Phuong D. Ngo \footnote{Corresponding author\\ Email: phuong.dinh.ngo@ehealthresearch.no (Phuong Ngo), fred.godtliebsen@uit.no (Fred Godtliebsen)}}
\author[2]{Fred Godtliebsen}
\affil[1]{Norwegian Centre for E-health Research, Troms\o, Norway}
\affil[2]{UiT The Arctic University of Norway, Troms\o, Norway}
\begin{document}
\maketitle

\begin{abstract}
This paper proposes a robust control design method using reinforcement-learning for controlling partially-unknown dynamical systems under uncertain conditions. The method extends the optimal reinforcement-learning algorithm with a new learning technique that is based on the robust control theory. By learning from the data, the algorithm proposed actions that guarantees the stability of the closed loop system within the uncertainties estimated from the data. Control policies are calculated by solving a set of linear matrix inequalities. The controller was evaluated using simulations on a blood glucose model for patients with type-1 diabetes. Simulation results show that the proposed methodology is capable of safely regulates the blood glucose within a healthy level under the influence of measurement and process noises. The controller has also significantly reduced the post-meal fluctuation of the blood glucose. A comparison between the proposed algorithm and the existing optimal reinforcement learning algorithm shows the improved robustness of the closed loop system using our method.
\end{abstract}

%

\section{Introduction}
Control of unknown dynamic systems with uncertainties is a challenge since most of the controllers require an exact mathematical model. Due to the fact that many processes are complicated, nonlinear and varying with time, a control algorithm that does not depend on a mathematical model and is able to adapt to time-varying conditions is required. A popular approach is to develop a universal approximator for predicting the output of unknown systems \cite{Lee2000}. Control algorithms can then be designed based on the parameters of the approximators. Many control techniques have been proposed based on this approach using neural networks and fuzzy logic. For example, Goyal et al. \cite{Goyal2015} proposed a robust sliding mode controller which can be designed from Chebyshev neural networks. Chadli and Guerra \cite{Chadli2012} introduced a robust static output feedback controller for Takagi Sugeno (TS) fuzzy models. Ngo and Shin \cite{Ngo} proposed a method to model unstructured uncertainties using type-2 fuzzy neural networks and the Takagi Sugeno fuzzy controller based on the model.

However, obtaining a good approximator requires a significant amount of data,  especially for a complicated model with many inputs and outputs. The data-driven model must also be updated frequently for time-varying systems. Also, many control design techniques assume uncertainties as functions of system parameters. However, in many cases, the causes of uncertainties are not known (i.e. unstructured uncertainties).

With the development of data science and machine learning, reinforcement learning (RL) has emerged as an effective method to control unknown nonlinear systems such as robotics manipulator or insulin-glucose models, see \cite{Bothe2013, DePaula2015a} and \cite{Ouyang2017}. Vrabie et al. \cite{Vrabie2013} proposed optimal adaptive control algorithms using RL for discrete and continuous dynamical systems. The principle of RL is based on the interaction between a decision-making agent and its environment \cite{Sutton2018}. In RL, the actor critic method is very popular and are used as the framework for many control algorithms. The critic agent uses current state information of the environment in oder to update the value or action value function. Then, the actor agent uses the value or action value function to calculate the optimal action.

However, many RL algorithms are lacking stability analysis of the control systems. Or the stability can only be ensured if the value function is estimated accurately. In many cases, this can not be achieved, especially at the beginning of the control process when the agent has just started interacting with the environment. Also in many applications, the state space is either continuous or very large, value function approximation must be used where approximation error can not be avoided. Therefore a new RL approach where stability can be guaranteed under uncertain conditions are very essential so that the algorithm can be used in many critical and safety-demanding systems such as aerospace or medical applications.

Type-1 diabetes is a disease caused by the lack of insulin secretion. The condition results in uncontrolled increase of blood glucose level if the patients are not provided with insulin doses. High blood glucose level can lead to both acute and chronic complications, and eventually result in failure of various organs. One of the major challenges in controlling the blood glucose is that the biochemical and physiologic kinetics of insulin and glucose is complicated, nonlinear, and only approximately known \cite{Wang2014}. Also, the stability of the control system is very essential in this case since unstable control effort will lead to life-threatening condition for the patients.

This paper proposes a data-driven robust control algorithm using reinforcement learning for partially-unknown dynamical systems. The purpose of the algorithm is to ensure the stability of the closed loop system under uncertainty conditions. The proposed methodology will be applied to a blood glucose model for testing its effectiveness in controlling the blood glucose level in patients with type-1 diabetes. 

The content of the paper is organized as follows. Section II describes the proposed robust reinforcement-learning algorithm. Section III shows the simulation results of the methodology. The conclusion is given in Section IV.

\section{Robust Control Using Reinforcement Learning}
In this paper, we consider a class of dynamical system which can be described by the following linear state-space equation:
\begin{equation}
\dot{x}(t) = Ax(t) + Bu(t)
\label{Eq:system_eq}
\end{equation}
where $x \in \mathbb{R}^{\mathit{n}}$ is the vector of $n$ state variables, $u \in \mathbb{R}^{\mathit{m}}$ is the vector of $m$ control inputs.  $A \in \mathbb{R}^ {\mathit{n\times n}} $ is the state matrix and $B \in \mathbb{R}^{\mathit{n\times m}} $ is the input matrix. It is assumed that matrix $A$ is unknown. Our target is to derive a control algorithm $u(t)$ that can regulate the state variables contained in $x(t)$ without knowing matrix $A$ and based on input and output data. 

As a RL framework, the proposed robust control algorithm consists of an agent that takes actions and learns the consequences of its actions in an unknown environment. The environment is defined by a state vector $x(t)$ that describes its states at time $t$. The action at time $t$ is represented by ${u}(t)$.
As a consequence of the action, a cost $r(t)$ is incurred and accumulated. The cost function $r(t)$ is assumed to be known and pre-defined as a function of the current state and action. The objective of the learning process is to minimize the total cost accumulation in the future. 

At each decision time point, the agent receives information about the state of the environment and chooses an action. The environment reacts to this action and transitions to a new state, which determines whether the agent receives a positive or negative reinforcement. Current RL techniques proposed optimal actions by minimizing the predicted cost accumulation. However, uncertainties due to noises in the data or inaccurate estimation of the cost accumulation can lead to sub-optimal actions and even unstable responses. Our target is to provide the agent with a robust and safe action that can guarantee the reduction of the future cost accumulation in the presence of uncertainties. The action calculated by the proposed algorithm may not be the optimal action that reduce the cost in the fastest way, but it can always guarantee the stability of the system, which is imperative in many critical applications.

\subsection{Estimation of the Value Function by the Critics}
In the RL context, the accumulation of cost over time, when starting in the state $x(t)$ and following policy $\pi$, is defined as the value function of policy $\pi$, i.e.:
\begin{equation}
	V^\pi(x(t))  =  E_{\pi} \left\{ \int_{t}^{\infty} \gamma^{\tau-t} r(\tau)d\tau \right\}
	\label{Eq:q_def}
\end{equation}
where $\gamma = 1$ is the discount factor. The cost $r(t)$ is assumed to be a quadratic function of the states:
\begin{equation}
r(t)   =x^T(t)Q x(t)
\label{Eq:cost_function}
\end{equation}
where the positive definite matrix $Q \in \mathbb{R}^ {\mathit{n\times n}}$ is symmetric and positive semidefinite (since the cost is assumed to be non-negative) contains the weighting factors of the variables that are minimized.

In order to facilitate the formulation of the stability condition in the form of linear matrix inequalities (LMI), the value function $V(x(t))$ is approximated by a quadratic function of the states:
\begin{equation}
	V^\pi(x(t))    \approx x^T(t)Px(t)
	\label{Eq:P_matrix}
\end{equation}
where the kernel matrix $P \in \mathbb{R}^ {\mathit{n\times n}}$ is symmetric and positive semidefinite (since matrix $Q$ in the cost function is symmetric and positive semidefinite). 

By using the Kronecker operation, the approximated value function can be expressed as a linear combination of the basis function $\phi(x(t)) = (x(t)\otimes x(t))$:
\begin{equation}
	\begin{aligned}
		V^\pi(x(t))  &\approx x^T(t)Px(t) = \text{vec}(P)^T(x(t)\otimes x(t))\\
		 &=  w^T(x(t)\otimes x(t)) = w^T \phi(x(t))
	\end{aligned}
\label{Eq:V_approx}
\end{equation}
where $w$ is the parameter vector, $\phi(x(t))$ is the vector of basis functions and $\otimes$ is the Kronecker product. The transformation between $w$ and $P$ can be done as follows:
\begin{equation}
w=vec(P) = [P_{11}, P_{21}, ..., P_{n1}, P_{12}, ..., P_{1n}, P_{nn}]^T
\end{equation}
where $P_{i,j}$ is the element of matrix $P$ in the $i^{\text{th}}$ row and $j^{\text{th}}$ column.
With $T$ as the interval time for data sampling, the integral reinforcement learning (IRL) Bellman equation can be used to update the value function \cite{Vrabie2013}:
\begin{equation}
V^\pi(x(t)) = \int_{t}^{t+T} \gamma^{\tau-t} r(\tau)d\tau + V^\pi(x(t+T))  
\end{equation}

By using the quadratic cost function (Eq. \ref{Eq:cost_function}) and the approximated value function (Eq. \ref{Eq:V_approx}), the IRL Bellman equation can be written as follows:
\begin{equation}
x^T(t)Px(t) = \int_{t}^{t+T} x(\tau)^TQ x(\tau)d\tau + x^T(t+T)Px(t+T)
\label{Eq:Bellman}
\end{equation}
or
\begin{equation}
	\begin{aligned}
		w^T\phi(x(t)) = \int_{t}^{t+T} &x(\tau)^TQ x(\tau)d\tau + w^T\phi(x(t+T)) 
	\end{aligned}
\end{equation}
At each iteration, $n$ samples along the state-trajectory are collected ($x^{(1)}(t), x^{(2)}(t),..., x^{(n)}(t)$). The mean value of $w$ can be obtained by using least square technique:
\begin{equation}
	\hat{w} = (XX^T)XY
	\label{Eq:LS}
\end{equation}
where
\begin{equation}
	X = [ \phi_\Delta^1 \quad  \phi_\Delta^2 \quad ... \quad \phi_\Delta^N]^T,
	\label{Eq:X}
\end{equation}
\begin{equation}
	\phi_\Delta^i = \phi(x^i(t)) - \phi(x^i(t+T))
\end{equation}
\begin{equation}
	Y = [ d(x^1(t)) \quad d(x^2(t)) \quad ... \quad d(x^n(t)) ]^T
	\label{Eq:Y}
\end{equation}
\begin{equation}
	d(x^i(t)) = \int_{t}^{t+T} x^i(\tau)^TQ x^i(\tau)d\tau
\end{equation}
with $ i = 1,2,...,N$. 

The confidence interval for the coefficient $w^{(j)}$  is given by:
\begin{equation}
w^{(j)} \in [\hat{w}^{(j)} -q_{1-\frac{\theta}{2}}\sqrt{\tau_j\hat{\sigma}^2}, \hat{w}^{(j)} +q_{1-\frac{\theta}{2}}\sqrt{\tau_j\hat{\sigma}^2}]
\label{Eq:LS}
\end{equation}
where $1-\theta$ is the confidence level, $q_{1-\frac{\theta}{2}}$ is the quantile function of standard normal distribution, $\tau_j$ is the jth element on the diagonal of $(XX^T)^{-1}$ and $\hat{\sigma}^2 = \frac{\hat{\epsilon}^T\hat{\epsilon}}{n-p}$, 
with $ \epsilon = Y - \hat{w}X$. From that, the uncertainty $\Delta w$ is defined as the deviation interval around the nominal value : 
\begin{equation}
\Delta w = \left[ -q_{1-\frac{\theta}{2}}\sqrt{\tau_j\hat{\sigma}^2},\  -q_{1-\frac{\theta}{2}}\sqrt{\tau_j\hat{\sigma}^2} \right]
\end{equation}
Matrices $\hat{P}$ and $\Delta P$ can be obtained by placing elements of $\hat{w}$ and $\Delta w$ into columns. 
\subsection{Policy Improvement by the Actor}
Linear feedback controller has been widely used as a stabilization tool for nonlinear systems where its dynamic behavior is considered approximately linear around the operating condition \cite{Kothare1996, Jyun-HorngFu1991, Eker2002}. Hence, in this paper, we use linear functions of the states with gain $K_i$ as the control policy at iteration $i$:
\begin{equation}
{u}(t)=\pi(x(t))=-K_ix(t)
\label{Eq:current_policy}
\end{equation}
and the level of uncertainty is constant during the controlling process, the task of the actor is to robustly improve the current policy such that the value function is guaranteed to be reduced during the next policy implementation. If the following differential inequality is satisfied:
\begin{equation}
\dot{V}_i (x(t)) + \alpha V_i(x(t)) \leq 0,
\end{equation}
 with some positive constant $\alpha$ then by using the comparison lemma (Lemma 3.4 in \cite{Khalil2002}), the derivative of function $\dot{V}_i (x(t))$ can be bounded by
\begin{equation}
\dot{V}_i (x(t)) \leq V_i (x(t_0)) e^{-\alpha(t-t0)}
\end{equation}
Therefore, maximizing the rate $\alpha$ will ensure a maximum exponential decrease in the value of $\dot{V}_i (x(t))$.

The following part shows the main results of the paper, which describe how the policy gain can improved during the learning process. Derivations of the results are provided in the stability analysis (Subsection \ref{sect:Stability}). In order to relax the stability condition and maximize the chance to obtain feasible solutions, we divide the problem into two cases depending on whether the policy update is made frequently or not. In the general policy-update case, it is assumed that the sign of all the state variables can be changed between each policy update interval. In the frequent policy-update case, it is assumed that the sign of all state variables cannot be changed between each policy update interval.

\subsubsection{General Policy Update} 
The improved policy $K_{i+1}$ can be obtained by solving the following linear program:

Minimize $\alpha$ subjected to the following LMIs
\begin{equation}
\begin{aligned}
\left[ \begin{array}{ccc}
U &  K_{i+1}^TB^T & \beta \\ 
B K_{i+1} & -\gamma_2 I & 0 \\ \beta & 0 & -\frac{\gamma_1}{\beta^2}I
\end{array} \right] \leq 0
\end{aligned}
\label{Eq:general_stable_cond}
\end{equation}
and
\begin{equation}
	\left[ \begin{array}{cc}
		\zeta 
		&  K^T_{i+1} \\ 
		K^T_{i+1} & -I
	\end{array} \right]
	\leq 0
\label{Eq:general_bound_cond}
\end{equation}
where the notion $S\leq0$ is a generalized inequality meaning $S$ is a negative semidefinite matrix, $\beta$ is the worst case norm of $\Delta P_i$, which can be estimated using the $\mu$ analysis \cite{Young1991},

\begin{equation}
\begin{aligned}
U = &  M - \hat{P}_i B K_{i+1} - K_{i+1}^TB^T\hat{P}_i  +  \alpha H  
\end{aligned}
\label{Eq:V}
\end{equation}

\begin{equation}
M =   -Q-K_{i}^TRK_{i} + \hat{P}_iBK_i + K_i^TB^T\hat{P}_i  + \gamma_1 K_i^TB^TB K_i  + \beta^2 \gamma_2 I
\label{Eq:M}
\end{equation}
and
\begin{equation}
H = \hat{P}_i + \frac{\beta^2}{2}I + I
\label{Eq:H}
\end{equation}
Inequality (\ref{Eq:general_stable_cond}) provides the stable condition and inequality (\ref{Eq:general_bound_cond}) provides the upper bound for the norm of the updated gain $K_{i+1}$. The derivation of (\ref{Eq:general_stable_cond}) is provided in Subsection \ref{sect:Stability}.
\subsubsection{Frequent Policy Update}
\begin{definition}
	Assume $A$ is a square matrix with dimension $n \times n$ and $x$ is a vector with dimension $n \times 1$. The \emph{maximize} operation on matrix $A$ and vector $x$ is defined as follows:
	\begin{equation}
	\mathrm{maximize}(A,x) = C
	\end{equation}
	where
	\begin{equation}
	C_{ij} = \left\{ \begin{array}{c}
	\mathrm{max}(A_{ij})\ \mathrm{if}\ x_ix_j \geq 0\\ 
	\mathrm{min}(A_{ij})\ \mathrm{if}\ x_ix_j < 0
	\end{array} \right. \mathrm{with} \text{ } i,j = 1..n
	\end{equation}
	\label{def:maximize}
\end{definition}
The improved policy $K_{i+1}$ can be obtained by minimizing $\alpha$ subject to
\begin{equation}
\begin{aligned}
\left[ \begin{array}{ccc}
V &  K_{i+1}^TB^T \\ 
B K_{i+1} & -\gamma_2 I
\end{array} \right] \leq 0
\end{aligned}
\label{Eq:frequent_stable_cond}
\end{equation}
and
\begin{equation}
\left[ \begin{array}{cc}
\zeta 
&  K^T_{i+1} \\ 
K^T_{i+1} & -I
\end{array} \right]
\leq 0
\label{Eq:frequent_bound_cond}
\end{equation}
where:
\begin{equation}
V =  M + \Delta P^T_{i,max} \Delta P_{i,max} \gamma_2 - \hat{P}_i B K_{i+1} \\
   - K_{i+1}^TB^T\hat{P}_i   + \alpha (\hat{P}_i + \frac{1}{2}\Delta P^T_{i,max} \Delta P_{i,max} + I) 
\label{Eq:V}
\end{equation}
and
\begin{equation}
\begin{aligned}
M = &  -Q-K_{i}^TRK_{i} + \hat{P}_iBK_i + K_i^TB^T\hat{P}_i  + H_i
\end{aligned}
\label{Eq:M}
\end{equation}
with
$\Delta P_{i,\text{max}} = \text{maximize}(\Delta P_{i}, x)$ and
$H_i = \text{maximize}(\Delta P_i B K_i   + K_i^TB^T\Delta P_i, x)$.

Similar to the general policy update, inequality (\ref{Eq:frequent_stable_cond}) provides the stable condition and inequality (\ref{Eq:frequent_bound_cond}) provides the upper bound for the updated gain $K_{i+1}$. The derivation of (\ref{Eq:frequent_stable_cond}) is provided in Subsection \ref{sect:Stability}.

\subsection{Stability Analysis} \label{sect:Stability}
With the control policy as described in Eq. \ref{Eq:current_policy}, the equation for the closed loop system can be derived as follows:
\begin{equation}
	\dot{x}(t) = Ax(t) - BKx(t) = (A-BK)x(t)
	\label{Eq:CL_eq}
\end{equation}

\begin{lemma} 
	Assume that the closed loop system described by Eq. \eqref{Eq:CL_eq} is stable, solving for $P$ in Eq. \eqref{Eq:Bellman} is equivalent to finding the solution of the underlying Lyapunov equation \cite{Vrabie2013}:
	\begin{equation}
	P(A-BK) + (A-BK)^TP= -Q
	\label{Eq:Lyapunov}
	\end{equation}
\end{lemma}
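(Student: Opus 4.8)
The plan is to prove the equivalence in both directions, exploiting the fact that the stability assumption makes $A-BK$ Hurwitz, so that the value-function integral in \eqref{Eq:q_def} converges and the Lyapunov equation \eqref{Eq:Lyapunov} admits the unique symmetric solution $P=\int_{0}^{\infty}e^{(A-BK)^{T}s}\,Q\,e^{(A-BK)s}\,ds$, which is moreover positive semidefinite because $Q$ is.

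First I would show that any solution of \eqref{Eq:Lyapunov} solves the IRL Bellman equation \eqref{Eq:Bellman}. Along an arbitrary trajectory of the closed-loop system \eqref{Eq:CL_eq}, differentiate the quadratic form: $\frac{d}{d\tau}\big(x^{T}(\tau)Px(\tau)\big)=\dot{x}^{T}(\tau)Px(\tau)+x^{T}(\tau)P\dot{x}(\tau)=x^{T}(\tau)\big[(A-BK)^{T}P+P(A-BK)\big]x(\tau)=-x^{T}(\tau)Qx(\tau)$, where the last step uses \eqref{Eq:Lyapunov}. Integrating this identity from $t$ to $t+T$ yields $x^{T}(t+T)Px(t+T)-x^{T}(t)Px(t)=-\int_{t}^{t+T}x^{T}(\tau)Qx(\tau)\,d\tau$, which after rearrangement is exactly \eqref{Eq:Bellman}.

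Conversely, suppose the symmetric matrix $P$ satisfies \eqref{Eq:Bellman}. Since the underlying relation $V^{\pi}(x(t))=\int_{t}^{t+T}r(\tau)\,d\tau+V^{\pi}(x(t+T))$ holds for \emph{every} horizon $T$ (by additivity of the integral in \eqref{Eq:q_def}), I would differentiate both sides of \eqref{Eq:Bellman} with respect to the upper limit and let $T\to 0$. Using $\frac{d}{dT}\big(x^{T}(t+T)Px(t+T)\big)\big|_{T=0}=x^{T}(t)\big[(A-BK)^{T}P+P(A-BK)\big]x(t)$ and $\frac{d}{dT}\int_{t}^{t+T}x^{T}(\tau)Qx(\tau)\,d\tau\big|_{T=0}=x^{T}(t)Qx(t)$, this gives $x^{T}(t)\big[(A-BK)^{T}P+P(A-BK)+Q\big]x(t)=0$. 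Because this must hold for every admissible state $x(t)$ and the bracketed matrix is symmetric, it must be the zero matrix, which is \eqref{Eq:Lyapunov}.

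The step that needs the most care — rather than any hard calculation — is the justification that \eqref{Eq:Bellman} may be treated as an exact identity holding over a continuum of states and horizons: exactness follows because for the linear system \eqref{Eq:CL_eq} with quadratic cost \eqref{Eq:cost_function} the true value function is genuinely quadratic, so the approximation \eqref{Eq:P_matrix} is tight; the "for all states" clause is what lets a quadratic form vanishing identically force the defining matrix to vanish; and the stability assumption is precisely what guarantees the integral in \eqref{Eq:q_def} converges so that $P$ exists in the first place. The remaining steps are elementary calculus plus the standard fact that a symmetric $S$ with $x^{T}Sx\equiv 0$ satisfies $S=0$; the result itself is the classical IRL identity of \cite{Vrabie2013}.
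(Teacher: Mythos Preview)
Your forward direction (Lyapunov $\Rightarrow$ Bellman) is exactly the paper's argument: differentiate $x^{T}Px$ along closed-loop trajectories, invoke \eqref{Eq:Lyapunov} to obtain $\dot V=-x^{T}Qx$, and integrate over $[t,t+T]$ to recover \eqref{Eq:Bellman}. The paper, however, stops there: it proves only this one implication and then remarks that stability makes the Lyapunov solution unique, implicitly relying on that uniqueness to justify the word ``equivalent.'' Your converse---differentiating \eqref{Eq:Bellman} in $T$ at $T=0$ and using that a symmetric quadratic form vanishing for all $x$ forces the matrix to vanish---is an explicit proof of the reverse implication that the paper does not supply. So your proposal is correct and strictly more complete; the added step buys an honest two-sided equivalence rather than a one-way derivation plus a uniqueness appeal.
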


\begin{proof}
	We start with Eq. \ref{Eq:Lyapunov} and try to prove that matrix $P$ is also the solution of Eq. \eqref{Eq:Bellman}.
	Consider $V(x(t))   = x^T(t)Px(t)$, where $P$ is the solution of Eq. \eqref{Eq:Lyapunov}, we have:
	\begin{equation}
	\begin{aligned}
	\dot{V}(x(t)) &= \frac{d(x^T(t)Px(t))}{dt} \\
	&=\dot{x}^T(t)Px(t) + x^T(t)P\dot{x}(t) \\
	&= x^T(t)\left[(A-BK)^TP+ P(A-BK)\right]x(t) \\
	&=  -x^T(t)Qx(t) \quad \text{(using Eq. 27)}
	\label{Eq:Lemma_prove}
	\end{aligned}
	\end{equation}	
	Since the closed-loop system is stable, the Lyapunov equation \eqref{Eq:Lyapunov} has a unique solution, $P_i > 0$. From \eqref{Eq:Lemma_prove}, this solution will satisfy:
	\begin{equation}
	\frac{d(x^T(t)P_ix(t))}{dt}  =  -x^T(t)Qx(t)
	\end{equation}
	which is equivalent to 
	\begin{equation}
	x^T(t+T)Px(t+T) - x^T(t)Px(t)  = 
	\int_{t}^{t+T} -x^T(\tau)Qx(\tau)d\tau
	\end{equation}
	Therefore, $P$ is also the solution of Eq. \eqref{Eq:Bellman}.
\end{proof}

\begin{lemma} 
	Given matrices $E$ and $F$ with appropriate dimensions, the following linear matrix inequality (LMI) can be obtained:
	\begin{equation}
	EF^T + FE^T \leq EE^T + FF^T
	\end{equation}
	\label{lemma_matrix_product}
\end{lemma}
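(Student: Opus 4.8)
The plan is to derive this inequality from the elementary fact that any real matrix $G$ satisfies $GG^T \succeq 0$, since for every vector $v$ we have $v^T G G^T v = \|G^T v\|^2 \geq 0$. The natural choice here is $G = E - F$, which is well-defined precisely under the standing assumption that $E$ and $F$ have the same dimensions (so that $EE^T$, $FF^T$, $EF^T$, $FE^T$ are all square matrices of the same size and the stated inequality is meaningful).

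First I would write $0 \preceq (E-F)(E-F)^T$ and expand the right-hand side using bilinearity of the matrix product and the transpose identity $(E-F)^T = E^T - F^T$, obtaining $(E-F)(E-F)^T = EE^T - EF^T - FE^T + FF^T$. Then I would simply rearrange this into $EF^T + FE^T \preceq EE^T + FF^T$, which is exactly the claimed LMI. Note that $EF^T + FE^T$ is automatically symmetric, as is $EE^T + FF^T$, so the generalized inequality $\preceq$ (negative semidefiniteness of the difference) is the appropriate notion and the manipulation is consistent.

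I do not anticipate a genuine obstacle: the argument is a one-line completion-of-squares identity for matrices. The only point requiring a word of care is the dimension bookkeeping — confirming that "appropriate dimensions" means $E$ and $F$ are conformable for subtraction — and noting that the result, being stated over the reals, uses transpose rather than conjugate transpose, which is harmless here. If desired, I would add the remark that the same proof with $G = E + F$ yields the companion bound $-(EE^T + FF^T) \preceq EF^T + FE^T$, so that in fact $\|EF^T + FE^T\|$ is controlled by $EE^T + FF^T$, though only the stated one-sided version is needed in the sequel.
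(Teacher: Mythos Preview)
Your proposal is correct and follows exactly the same route as the paper: expand $(E-F)(E-F)^T \succeq 0$ and rearrange. Your write-up is in fact more careful about dimensions and the meaning of the generalized inequality than the original.
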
 
\vspace{-0.4cm}
\begin{proof}
	From the properties of matrix norm, we have:
	\begin{equation}
	(E-F)(E-F)^T \geq 0
	\end{equation}
	which is equivalent to:
	\begin{equation}
	EE^T + FF^T - EF^T - FE^T \geq 0
	\end{equation}
	or
	\begin{equation}
	EF^T +FE^T \leq EE^T + FF^T
	\end{equation}
\end{proof}
\begin{theorem}	
	Consider a dynamic system that can be represented by Eq. \eqref{Eq:system_eq} with unknown state matrix. The estimated value function at iteration $i$ is $V_i(x(t))   = x^T(t)P_ix(t)$ with $P_i = \hat{P}_i+ \Delta P_i$. If:
	\begin{itemize}
		\item the current control policy ${u}(t)=\pi_i(x(t))=-K_ix(t)$ is stabilizing,
		\item the LMI given in (\ref{Eq:general_stable_cond}) is satisfied with some positive constants $\gamma_1$ and $\gamma_2$,
	\end{itemize}
then the closed loop system with the control policy ${u}(t)=-K_{i+1}x(t)$ is quadratic stable with convergence rate $\alpha$.
	\label{Theorem:Policy_Update}
\end{theorem}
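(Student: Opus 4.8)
The plan is to use the iteration-$i$ value function itself, $V_i(x(t))=x^T(t)P_i x(t)$, as a Lyapunov certificate for the \emph{updated} closed loop $\dot x(t)=(A-BK_{i+1})x(t)$ obtained from \eqref{Eq:CL_eq}. By the comparison argument recalled earlier, quadratic stability with convergence rate $\alpha$ is equivalent to $\dot V_i(x(t))+\alpha V_i(x(t))\le 0$ for all $x(t)$, i.e.\ to the matrix inequality
\begin{equation*}
(A-BK_{i+1})^{T}P_i+P_i(A-BK_{i+1})+\alpha P_i\le 0 .
\end{equation*}
Since $A$ is unknown, this cannot be imposed directly, so the first step is to eliminate $A$. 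I would write $A-BK_{i+1}=(A-BK_i)-B(K_{i+1}-K_i)$ and invoke the first Lemma: because $K_i$ is stabilizing, $P_i$ solves the underlying Lyapunov equation \eqref{Eq:Lyapunov}, so the combination $(A-BK_i)^{T}P_i+P_i(A-BK_i)$ can be replaced by its known right-hand side. What remains is a matrix inequality involving only the data $Q$, $R$, $B$, $K_i$, the decision variables $K_{i+1}$ and $\alpha$, and the still-uncertain matrix $P_i$.

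The second step is to absorb the uncertainty $P_i=\hat P_i+\Delta P_i$ with $\|\Delta P_i\|\le\beta$. Substituting this decomposition splits the inequality into a nominal part built from $\hat P_i$ and a remainder that is linear in the symmetric but sign-indefinite $\Delta P_i$, containing the cross terms $-(\Delta P_iBK_{i+1}+K_{i+1}^{T}B^{T}\Delta P_i)$, $\Delta P_iBK_i+K_i^{T}B^{T}\Delta P_i$, and $\alpha\Delta P_i$. Each of these I would bound from above, uniformly over all admissible $\Delta P_i$, using the matrix-product inequality of Lemma~\ref{lemma_matrix_product} in its $\gamma$-weighted form together with $\Delta P_i^{T}\Delta P_i\le\beta^{2}I$; the term $\alpha\Delta P_i$ is handled by completing the square, $\Delta P_i\le\tfrac12(\Delta P_i^{2}+I)$, followed by the norm bound. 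Collecting the nominal terms with these upper bounds reproduces exactly the matrix $U$ (built from $\hat P_i$, $M$ and $H$) plus the quadratic term $\gamma_2^{-1}K_{i+1}^{T}B^{T}BK_{i+1}$ and a positive term proportional to $I$ carrying the $\beta,\gamma_1$ weights, so the sufficient condition takes the form $U+\gamma_2^{-1}K_{i+1}^{T}B^{T}BK_{i+1}+(\text{const})\,I\le 0$. A Schur-complement step applied to the definite block $\mathrm{diag}(-\gamma_2 I,-\gamma_1\beta^{-2}I)$ shows that this is precisely the LMI \eqref{Eq:general_stable_cond}. Hence, whenever that LMI holds, $\dot V_i+\alpha V_i\le 0$ along the updated closed loop for every admissible $\Delta P_i$, in particular for the true one; since $P_i>0$ (from the Lyapunov equation with $Q\succ0$), $V_i$ is a valid Lyapunov function and the closed loop is quadratically stable with rate $\alpha$.

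The part I expect to be delicate is the bookkeeping in the second step: one must carefully separate the cross terms involving the decision variable $K_{i+1}$ — which therefore have to be peeled off with a $\gamma$-weighted Young-type inequality so that a Schur complement can later re-absorb the resulting $K_{i+1}^{T}B^{T}BK_{i+1}$ block — from the terms involving only the known $K_i$, which can be pushed into constant $I$-blocks, all while respecting that $\Delta P_i$ is sign-indefinite so that both signs must be dominated simultaneously. Getting the scalar weights $\gamma_1,\gamma_2$ and the various powers of $\beta$ to line up exactly with the stated forms of $M$, $H$ and the $3\times3$ LMI is the fiddly accounting; the surrounding Lyapunov and Schur-complement reasoning is routine.
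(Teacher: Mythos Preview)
Your proposal is correct and follows essentially the same route as the paper's own proof: use $V_i=x^TP_ix$ as a Lyapunov function for the updated loop, eliminate $A$ via the Lyapunov identity \eqref{Eq:Lyapunov} at step $i$, split $P_i=\hat P_i+\Delta P_i$, bound the three $\Delta P_i$-cross terms with the $\gamma$-weighted version of Lemma~\ref{lemma_matrix_product} together with $\Delta P_i\Delta P_i^{T}\le\beta^{2}I$, and then cast the resulting quadratic-in-$K_{i+1}$ condition as the $3\times3$ LMI \eqref{Eq:general_stable_cond} by Schur complement. The ``delicate bookkeeping'' you flag is indeed the only non-routine part, and your plan for it matches the paper's execution.
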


\begin{proof}
	Since the current control policy is stable, the estimated parameter matrix $P_i$ is positive definite. Hence, $V_i(x(t)) = x_t^T P_i x_t > 0$. Here, we will use  $V_i(x(t))$ as the Lyapunov function for the updated control policy ${u}(t)=\pi_{i+1}(x(t))=-K_{i+1}x(t)$. For notation convenience, the state vector $x(t)$ and input vector $u(t)$ are denoted as $x_t$ and $u_t$, respectively. 
	We have:
	\begin{equation*}
	\begin{aligned}
	\dot{V}_i&(x(t)) + \alpha V_i(x(t)) \\
	= & \dot{x}_t^T P_ix_t + x_t  P_i\dot{x}_t^T + \alpha x_t^T P_ix_t\\
	= & (Ax_t+Bu_t)^T P_ix_t + x_t  P_i(Ax_t+Bu_t)^T + \alpha x_t^T P_ix_t\\
	= & x_t^T[P_i(A-BK_{i+1}) + (A-BK_{i+1})^T P_i + \alpha P_i]x_t \\
	= &x_t^T[P_i(A-BK_{i}) + (A-BK_{i})^TP_i + \alpha P_i]x_t
	+ x_t^T[P_iB(K_{i}-K_{i+1}) + (K_i-K_{i+1})^TB^T P_i  \alpha P_i]x_t\\
	= & -x_t^T[Q+K_{i}^TRK_{i}]x_t 
	 x_t^T[(\bar{P}_i+ \Delta P_i) B(K_{i}-K_{i+1})
	+  (K_i-K_{i+1})^TB^T (\bar{P}_i+ \Delta P_i) + \alpha \bar{P}_i +\alpha \Delta P_i]x_t\\
	= & x_t^T[-Q-K_{i}^TRK_{i} + \bar{P}_iBK_i + K_i^TB^T\bar{P}_i  +  \alpha \bar{P}_i 
	+  \Delta P_i B K_i  + K_i^TB^T\Delta P_i  - \Delta P_i B K_{i+1}   \\
	-& K_{i+1}^TB^T\Delta P_i   - \bar{P}_i B K_{i+1}  - K_{i+1}^TB^T\bar{P}_i +\alpha \Delta P_i ]x_t \\
	\end{aligned}
	\label{eq:the1:Vdot1}
	\end{equation*}
	
	By using Lemma \ref{lemma_matrix_product}, the following inequalities can be obtained:
	\begin{equation}
	\begin{aligned}
	\Delta P_i B K_i  & + K_i^TB^T\Delta P_i  
	\leq \frac{1}{\gamma_1} \Delta P_i \Delta P_i^T + \gamma_1 (B K_i)^T(B K_i) 
	 \leq \frac{\beta^2}{\gamma_1}  I + \gamma_1 K_i^TB^TB K_i 
	\end{aligned}
	\label{ieq:the1:prev_K}
	\end{equation}
	\begin{equation}
	\begin{aligned}
	- \Delta P_i B & K_{i+1}   - K_{i+1}^TB^T\Delta P_i  \leq \gamma_2\Delta P_i \Delta P_i^T + \frac{1}{\gamma_2} (B K_{i+1})^T(B K_{i+1})  
	 \leq \beta^2 \gamma_2  I + \frac{1}{\gamma_2} K_{i+1}^TB^TB K_{i+1}
	\end{aligned}
	\label{ieq:the1:next_K}
	\end{equation}
	\begin{equation}
	\begin{aligned}
	\alpha \Delta P_i   \leq \alpha \left( \frac{1}{2}\Delta P_i \Delta P_i^T + I \right) \leq \left( \frac{\alpha \beta^2}{2}+ \alpha \right) I
	\end{aligned}
	\label{ieq:the1:next_K}
	\end{equation}
	
	Hence, $\dot{V}_i(x(t)) + \alpha V_i(x(t))$ can be bounded by:
	\begin{equation*}
	\begin{aligned}
	\dot{V}_i(x(t)) & + \alpha V_i(x(t)) \leq x_t^T [ -Q-K_{i}^TRK_{i} + \bar{P}_iBK_i + K_i^TB^T\bar{P}_i  +  \alpha  \left(\bar{P}_i + \frac{\beta^2}{2}I + I\right)  \\
	&  + \beta^2 \gamma_2 I  + \frac{\beta^2}{\gamma_1}  I  
	 - \bar{P}_i B K_{i+1}  - K_{i+1}^TB^T\bar{P}_i + \gamma_1 K_i^TB^TB K_i  +  \frac{1}{\gamma_2} K_{i+1}^TB^TB K_{i+1}] x_t
	\end{aligned}
	\end{equation*}
	
	Using the Lyapunov theory, the system will be quadratic stable with the convergent rate $\alpha$ if $\dot{V}_i(x(t)) \leq -\alpha V_i(x(t))$. This condition will satisfy if:
	\begin{equation*}
	\begin{aligned}
	&x_t^T [ -Q-K_{i}^TRK_{i} + P_iBK_i + K_i^TB^TP_i  + \alpha  \left(P_i + \frac{\beta^2}{2}I + I\right)  + \beta^2 \gamma_2 I  + \frac{\beta^2}{\gamma_1}  I + \gamma_1 K_i^TB^TB K_i \\
	& + \frac{1}{\gamma_2} K_{i+1}^TB^TB K_{i+1}- P_i B K_{i+1}  - K_{i+1}^TB^TP_i  ] x_t \leq 0
	\end{aligned}
	\end{equation*}
	The above condition can be written in the matrix form as shown in the theorem.
\end{proof}

\begin{lemma} 
	Given $A$ as a square matrix with dimension $n \times n$ and $x$ as a vector with dimension $n \times 1$, the following LMI can be obtained:
	\begin{equation}
	x^TAx \leq x^TCx
	\end{equation} 
	where $C = \mathrm{maximize}(A,x)$ as in Definition \ref{def:maximize}.
	\label{lemma_matrix_maximize}
\end{lemma}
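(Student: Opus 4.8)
The plan is to reduce the claimed matrix inequality to a collection of scalar inequalities, one for each entry, and to dispatch each of them by a one-line sign argument. First I would expand both quadratic forms entrywise,
\begin{equation}
x^TAx = \sum_{i=1}^{n}\sum_{j=1}^{n} A_{ij}\,x_i x_j, \qquad x^TCx = \sum_{i=1}^{n}\sum_{j=1}^{n} C_{ij}\,x_i x_j ,
\end{equation}
so that it suffices to prove the termwise bound $A_{ij}\,x_i x_j \le C_{ij}\,x_i x_j$ for every pair $(i,j)$ and then add the $n^2$ inequalities.

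Next I would split according to the two cases of Definition \ref{def:maximize}, driven by the sign of the product $x_i x_j$. When $x_i x_j \ge 0$, the entry $C_{ij}$ is taken as the largest admissible value of the (uncertain) entry $A_{ij}$, hence $C_{ij} \ge A_{ij}$; multiplying by the nonnegative number $x_i x_j$ preserves the inequality and yields $C_{ij}\,x_i x_j \ge A_{ij}\,x_i x_j$. When $x_i x_j < 0$, the entry $C_{ij}$ is taken as the smallest admissible value, hence $C_{ij} \le A_{ij}$; multiplying by the negative number $x_i x_j$ reverses the inequality and again gives $C_{ij}\,x_i x_j \ge A_{ij}\,x_i x_j$. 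In both cases the desired termwise bound holds.

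Finally I would sum these termwise inequalities over all $i,j$ to obtain $x^TAx = \sum_{i,j} A_{ij}\,x_i x_j \le \sum_{i,j} C_{ij}\,x_i x_j = x^TCx$, which is exactly the stated LMI; note that once $x$ is fixed, $C = \mathrm{maximize}(A,x)$ is an ordinary real matrix, so the right-hand side is a genuine quadratic form.

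I do not anticipate a real obstacle here. The only point needing care is the bookkeeping of the sign reversal in the case $x_i x_j < 0$, together with making explicit that $\mathrm{max}(A_{ij})$ and $\mathrm{min}(A_{ij})$ denote the extreme values of the uncertain entry $A_{ij}$, so that the bracketing $\mathrm{min}(A_{ij}) \le A_{ij} \le \mathrm{max}(A_{ij})$ is available for use; everything else is the elementary expansion of a quadratic form.
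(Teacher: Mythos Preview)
Your proposal is correct and follows essentially the same route as the paper: expand both quadratic forms entrywise and establish the termwise bound $A_{ij}x_ix_j \le C_{ij}x_ix_j$ by a sign-of-$x_ix_j$ case split. If anything, your explicit handling of the two cases is cleaner than the paper's shortcut through $|a_{ij}x_ix_j|$, which as written is a bit loose.
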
 

\begin{proof}
	We have:
	\begin{equation}
		\begin{aligned}
			x^TAx &= \sum_{i,j=1,2..n} a_{ij} x_{i} x_{j} \leq \sum_{i,j=1,2..n} |a_{ij} x_{i} x_{j}|\\
			&=\sum_{i,j=1,2..n} c_{ij} x_{i} x_{j} = x^TCx
		\end{aligned}
	\end{equation}
	where $c_{ij} = \left\{ \begin{array}{c}
	\text{max}(a_{ij})\ \mathrm{if}\ x_ix_j \geq 0\\ 
	\text{min}(a_{ij})\ \mathrm{if}\ x_ix_j < 0
	\end{array} \right.  $ with $i,j = 1..n$
\end{proof}

\begin{theorem}	
	Consider a dynamic system that can be represented by Eq. \eqref{Eq:system_eq} with unknown state matrix. The estimated value function at iteration $i$ is $V_i(x(t))   = x^T(t)P_ix(t)$ with $P_i = \hat{P}_i+ \Delta P_i$. If:
	\begin{itemize}
		\item the current control policy ${u}(t)=\pi_i(x(t))=-K_ix(t)$ is stabilizing,
		\item the LMI given in (\ref{Eq:frequent_stable_cond}) is satisfied with some positive constant $\gamma_2$,
	\end{itemize}
then the closed loop system with the control policy ${u}(t)=-K_{i+1}x(t)$ is quadratic stable with convergence rate $\alpha$.
	\label{Theorem:Policy_Update}
\end{theorem}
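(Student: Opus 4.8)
The plan is to reuse $V_i(x(t)) = x_t^{T} P_i x_t$ as a Lyapunov function for the updated closed loop $\dot x_t = (A - B K_{i+1}) x_t$, following the same steps as the general-policy-update theorem but with the worst-case norm bound $\beta$ on $\Delta P_i$ replaced throughout by the sign-aware quantities $\Delta P_{i,\mathrm{max}}$ and $H_i$ of Definition \ref{def:maximize}. Since $K_i$ is stabilising, Lemma 1 gives that the kernel $P_i = \hat P_i + \Delta P_i$ of the associated Lyapunov equation is positive definite, so $V_i(x_t) > 0$; it therefore suffices to establish $\dot V_i(x_t) + \alpha V_i(x_t) \le 0$, after which quadratic stability with convergence rate $\alpha$ follows from the comparison lemma, exactly as recorded in the text.

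First I would differentiate $V_i$ along the updated dynamics, obtaining
\[
\dot V_i(x_t) + \alpha V_i(x_t) = x_t^{T}\bigl[P_i(A - B K_{i+1}) + (A - B K_{i+1})^{T} P_i + \alpha P_i\bigr] x_t .
\]
Adding and subtracting $P_i B K_i + K_i^{T} B^{T} P_i$ brings in the current gain, and the Lyapunov identity $P_i(A - B K_i) + (A - B K_i)^{T} P_i = -Q - K_i^{T} R K_i$ attached to the stabilising policy $K_i$ (Lemma 1) removes the unknown matrix $A$. Substituting $P_i = \hat P_i + \Delta P_i$ then splits the expression into a computable part built from $\hat P_i, Q, R, K_i, K_{i+1}$ and three groups of cross terms in $\Delta P_i$: namely $\Delta P_i B K_i + K_i^{T} B^{T} \Delta P_i$, $-\Delta P_i B K_{i+1} - K_{i+1}^{T} B^{T} \Delta P_i$, and $\alpha \Delta P_i$.

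Next I would bound each $\Delta P_i$ group as a quadratic form in the frozen state $x_t$. Lemma \ref{lemma_matrix_maximize} bounds $x_t^{T}(\Delta P_i B K_i + K_i^{T} B^{T} \Delta P_i) x_t$ by $x_t^{T} H_i x_t$ with $H_i = \mathrm{maximize}(\Delta P_i B K_i + K_i^{T} B^{T} \Delta P_i, x)$; the scaled form of Lemma \ref{lemma_matrix_product} with weight $\gamma_2$ bounds $-\Delta P_i B K_{i+1} - K_{i+1}^{T} B^{T} \Delta P_i$ by $\gamma_2 \Delta P_i \Delta P_i^{T} + \gamma_2^{-1} K_{i+1}^{T} B^{T} B K_{i+1}$, and the same lemma with weight $1$ bounds $\alpha \Delta P_i$ by $\alpha(\tfrac12 \Delta P_i \Delta P_i^{T} + I)$. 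I would then dominate the leftover quadratic form $x_t^{T}\Delta P_i \Delta P_i^{T} x_t$ by $x_t^{T}\Delta P_{i,\mathrm{max}}^{T}\Delta P_{i,\mathrm{max}} x_t$. Collecting the computable terms into $M$ and then into $V$ (the definitions used for the frequent-update case) leaves $\dot V_i(x_t) + \alpha V_i(x_t) \le x_t^{T}\bigl(V + \gamma_2^{-1} K_{i+1}^{T} B^{T} B K_{i+1}\bigr) x_t$, and a Schur complement on the $-\gamma_2 I$ block shows that this upper bound is negative semidefinite exactly when the LMI \eqref{Eq:frequent_stable_cond} holds. Hence $\dot V_i \le -\alpha V_i$ with $V_i > 0$, which gives the claimed quadratic stability.

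The hard part will be justifying $x_t^{T}\Delta P_i \Delta P_i^{T} x_t \le x_t^{T}\Delta P_{i,\mathrm{max}}^{T}\Delta P_{i,\mathrm{max}} x_t$, which is the whole point of the frequent-update relaxation and is \emph{not} a direct instance of Lemma \ref{lemma_matrix_maximize} — that lemma controls $x^{T} A x$, not $x^{T} A A^{T} x$. One has to argue componentwise: the entry $(\Delta P_i^{T} x_t)_j = \sum_k (\Delta P_i)_{kj}(x_t)_k$ is maximal in magnitude when each $(\Delta P_i)_{kj}$ sits at the endpoint of its confidence interval carrying the sign of $(x_t)_k$, and because the intervals for the entries of $\Delta P_i$ are taken symmetric about $\hat P_i$ and the sign pattern of $x_t$ is frozen over the update interval, that worst case agrees up to an overall sign with $(\Delta P_{i,\mathrm{max}} x_t)_j$. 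The same sign-invariance hypothesis is precisely what makes $H_i$ and $\Delta P_{i,\mathrm{max}}$, evaluated at the instant of the policy update, remain valid bounds over the whole implementation of $K_{i+1}$; without it Lemma \ref{lemma_matrix_maximize} could be violated at a later state and the argument would break. I expect the careful bookkeeping of these interval and sign estimates, rather than any new idea beyond the general-update proof, to be where most of the effort goes.
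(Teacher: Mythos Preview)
Your proposal is correct and follows essentially the same route as the paper: differentiate $V_i$, eliminate $A$ via the Lyapunov identity for $K_i$, split $P_i=\hat P_i+\Delta P_i$, bound the $K_i$ cross term by $H_i$ via Lemma~\ref{lemma_matrix_maximize}, bound the $K_{i+1}$ and $\alpha\Delta P_i$ terms via Lemma~\ref{lemma_matrix_product}, replace $\Delta P_i\Delta P_i^{T}$ by $\Delta P_{i,\mathrm{max}}\Delta P_{i,\mathrm{max}}^{T}$, and finish with a Schur complement on the $-\gamma_2 I$ block. You are in fact more careful than the paper in flagging that the step $x_t^{T}\Delta P_i\Delta P_i^{T}x_t \le x_t^{T}\Delta P_{i,\mathrm{max}}\Delta P_{i,\mathrm{max}}^{T}x_t$ is not an instance of Lemma~\ref{lemma_matrix_maximize} and needs the componentwise interval-and-sign argument you sketch; the paper simply asserts this inequality.
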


\begin{proof}
	Since the current control policy is stable, the estimated parameter matrix $P_i$ is positive definite. Hence, $V_i(x(t)) = x_t^T P_i x_t > 0$. Here, $V_i(x(t))$ is used as the Lyapunov function for the updated control policy ${u}(t)=\pi_{i+1}(x(t))=-K_{i+1}x(t)$. For notation convenience, the state vector $x(t)$ and input vector $u(t)$ are denoted as $x_t$ and $u_t$, respectively. 
	We have:
	\begin{equation*}
	\begin{aligned}
	\dot{V}_i&(x(t)) + \alpha V_i(x(t)) \\
	= & \dot{x}_t^T P_ix_t + x_t  P_i\dot{x}_t^T + \alpha x_t^T P_ix_t\\
	= & (Ax_t+Bu_t)^T P_ix_t + x_t  P_i(Ax_t+Bu_t)^T + \alpha x_t^T P_ix_t\\
	= & x_t^T[P_i(A-BK_{i+1}) + (A-BK_{i+1})^T P_i + \alpha P_i]x_t \\
	= &x_t^T[P_i(A-BK_{i}) + (A-BK_{i})^TP_i + \alpha P_i]x_t 
	+   x_t^T[P_iB(K_{i}-K_{i+1}) + (K_i-K_{i+1})^TB^T P_i 
	+  \alpha P_i]x_t\\
	= & -x_t^T[Q+K_{i}^TRK_{i}]x_t 
	+  x_t^T[(\bar{P}_i+ \Delta P_i) B(K_{i}-K_{i+1})
	+  (K_i-K_{i+1})^TB^T (\bar{P}_i+ \Delta P_i) + \alpha \bar{P}_i +\alpha \Delta P_i]x_t\\
	= & x_t^T[-Q-K_{i}^TRK_{i} + \bar{P}_iBK_i + K_i^TB^T\bar{P}_i  +  \alpha \bar{P}_i 
	+  \Delta P_i B K_i  + K_i^TB^T\Delta P_i  - \Delta P_i B K_{i+1}   \\
	-& K_{i+1}^TB^T\Delta P_i   - \bar{P}_i B K_{i+1}  - K_{i+1}^TB^T\bar{P}_i +\alpha \Delta P_i ]x_t \\
	\end{aligned}	
	\end{equation*}
	
We have the following inequalities:
	\begin{equation}
	\Delta P_i B K_i   + K_i^TB^T\Delta P_i  \leq H_i
	\label{ieq:the1:prev_K}
	\end{equation}
	and
	\begin{equation}
	\begin{aligned}
	- \Delta P_i B  K_{i+1}   - K_{i+1}^TB^T\Delta P_i  
	&\leq \gamma_2\Delta P_i \Delta P_i^T + \frac{1}{\gamma_2} (B K_{i+1})^T(B K_{i+1})  \\
	& \leq \gamma_2\Delta P_{i,\text{max}} \Delta P_{i,\text{max}}^T + \frac{1}{\gamma_2} K_{i+1}^TB^TB K_{i+1} 
	\end{aligned}
	\label{ieq:the1:next_K}
	\end{equation}
	\begin{equation}
	\alpha \Delta P_i   \leq \alpha \left( \frac{1}{2}\Delta P_i \Delta P_i^T + I \right) \leq \alpha \left( \frac{1}{2}\Delta P_{i,\text{max}} \Delta P_{i,\text{max}}^T + I \right)
	\label{ieq:the1:next_K}
	\end{equation}
where $H_i = \text{maximize}(\Delta P_i B K_i   + K_i^TB^T\Delta P_i, x)$ \\and $\Delta P_{i,\text{max}} = \text{maximize}(\Delta P_{i}, x)$. The maximize operator is defined in Definition \ref{def:maximize}.

	Hence, $\dot{V}_i(x(t)) + \alpha V_i(x(t))$ can be bounded by:
	\begin{equation}
	\begin{aligned}
	\dot{V}_i(x&(t)) + \alpha V_i(x(t)) 
	\leq  x_t^T [ -Q-K_{i}^TRK_{i} + \bar{P}_iBK_i + K_i^TB^T\bar{P}_i  +  \alpha  \left(\bar{P}_i + \frac{1}{2}\Delta P_{i,\text{max}} \Delta P_{i,\text{max}}^T \right)  \\  
	& - \bar{P}_i B K_{i+1} - K_{i+1}^TB^T\bar{P}_i +  \gamma_2\Delta P_{i,\text{max}} \Delta P_{i,\text{max}}^T   + \frac{1}{\gamma_2} K_{i+1}^TB^TB K_{i+1} ] x_t 
	\end{aligned}
	\end{equation}
	Using the Lyapunov theory, the system will be quadratic stable with the convergent rate $\alpha$ if $\dot{V}_i(x(t)) \leq -\alpha V_i(x(t))$. This condition will satisfy if:
	\begin{equation*}
	\begin{aligned}
	&x_t^T [ -Q-K_{i}^TRK_{i} + \bar{P}_iBK_i + K_i^TB^T\bar{P}_i  +  \alpha  \left(\bar{P}_i + \frac{1}{2}\Delta P_{i,\text{max}} \Delta P_{i,\text{max}}^T \right) - \bar{P}_i B K_{i+1} - K_{i+1}^TB^T\bar{P}_i \\
	& +  \gamma_2\Delta P_{i,\text{max}} \Delta P_{i,\text{max}}^T  + \frac{1}{\gamma_2} K_{i+1}^TB^TB K_{i+1}  ] x_t \leq 0
	\end{aligned}
	\end{equation*}
	The above condition can be written in the matrix form as shown in the theorem.
\end{proof}

By using theorem 1 and 2, it can be seen that with the proposed improved policy, the closed loop system will be asymptotically stable.

\subsection{Robust Reinforcement Learning Algorithm}
The robust reinforcement learning algorithm for controlling partially unknown dynamically systems includes the following steps:

\subsubsection*{Initialization (step $i=0$)}
\begin{itemize}
	\item Select an initial insulin policy $u(t)=K_0 x(t)$.
\end{itemize}
\subsubsection*{Estimation of the value function (step $i=1,2,...$)}
\begin{itemize}
	\item Apply the control action $u(t)$ based on the current policy $u(t)=-K_ix(t)$.
	\item At time $t+T$, collect and compute the data set $(X,Y)$, which are defined in Eq. \ref{Eq:X} and Eq. \ref{Eq:Y}.
	\item Update vector $w$  by using the batch least square method (Eq. \eqref{Eq:LS}).
\end{itemize}

\subsubsection*{Control policy update}
\begin{itemize}
	\item Transform vector $w$ into the kernel matrix $P$ using the Kronecker transformation.
	\item Update the insulin policy by solving the LMI in Theorem \ref{Theorem:Policy_Update}.
\end{itemize}

\subsection{Simulation Setup}
A simulation study of the proposed robust RL controller was conducted on a glucose kinetics model, which can be described by \cite{Bergman1979, Hovorka2004, Wilinska2005a, Mosching2016}:

\begin{equation}
\frac{dD_1(t)}{dt}=A_GD(t) - \frac{D_1(t)}{\tau_D}
\label{}
\end{equation}
\begin{equation}
\frac{dD_2(t)}{dt}=\frac{D_1(t)}{\tau_D} - \frac{D_2(t)}{\tau_D}
\label{}
\end{equation}
\begin{equation}
\frac{dg(t)}{dt} = -p_1g(t) -\chi(t)g(t) + \frac{D_2(t)}{\tau_D} + w(t)
\label{Eq:BG_evolution}
\end{equation}
\begin{equation}
\frac{d\chi(t)}{dt} = -p_2\chi(t) + p_3 V \left(i(t)-i_b(t)\right)
\label{}
\end{equation} 
In this model, variable descriptions and parameter values can be found in Table 1 and Table 2. Variable $w(t)$ is the process noise. The measured blood glucose value is affected by a random noise $v(t)$:
\begin{equation}
\hat{g}(t) = g(t) + v(t)
\label{measured_noise}
\end{equation}
The inputs of the model are the amount of CHO intake $D$ and the insulin concentration $i$. The value of $i(t) - i_b(t)$ must be non-negative:
\begin{equation}
i(t) - i_b(t) \geq 0
\label{Eq:insulin_sturation}
\end{equation}
\begin{table}[h] 
	\centering
	\caption{Glucose kinetics model's parameters \cite{Bergman1979,Hovorka2004}} 
	\label{table:parameters}
	\begin{tabular}{|c|c|c|}
		\hline
		\multicolumn{2}{|c|}{Description} & Value \\ 
		\hline
		$p_1$ & Glucose effectiveness & 0.2 $ \mathrm{min}^{-1}$\\
		\hline
		$p_2$ & Insulin sensitivity& 0.028 $ \mathrm{min}^{-1}$\\
		\hline
		$p_3$ & Insulin rate of clearance& $10^{-4} \ \mathrm{min}^{-1}$\\
		\hline
		$A_G$ & Carbohydrate bio-availability& $0.8 \ \mathrm{min}^{-1}$\\
		\hline
		$\tau_D$ & Glucose absorption constant& $10$ min\\
		\hline
		$V$ & Plasma volume& $2730$ g\\
		\hline
		$i_b(t)$ & Initial basal rate&  $7.326 \ \mu \mathrm{IU}/\mathrm{(ml.min)}$ \\
		\hline
	\end{tabular}
\end{table}

\begin{table}[h]
	\centering
	\caption{Variables of the glucose kinetics model}
	\label{table:variables}
	\begin{tabular}{|c|c|c|}
		\hline
		\multicolumn{2}{|c|}{Description} & Unit \\ 
		\hline
		$D$ & Amount of carbohydrate intake & mmol/min\\
		\hline 
		$D_1$ &  Glucose in compartment 1 &  mmol\\ 
		\hline 
		$D_2$ &  Glucose in compartment 2&  mmol\\ 
		\hline 
		$g(t)$ &  Plasma glucose concentration &  mmol/l\\ 
		\hline 
		$\chi(t)$ &  Interstitial insulin activity&  $\mathrm{min}^{-1}$ \\
		\hline 
		$i(t)$ &  Plasma insulin concentration &  $\mu \mathrm{IU}\mathrm{/ml}$ \\
		\hline
		
	\end{tabular}
\end{table}

\section{Results}
In order to evaluate the performance of the robust RL controller, we implemented the controller on the glucose kinetics model as described in the previous section under a daily scenario of patients with type-1 diabetes. In order to make the scenario realistic, three different levels of uncertainties were used in the model. Uncertainties include process noise ($w(t)$) and measurement noise ($v(t)$). It is assumed that the noises are Gaussian distributions with standard deviations for each case as shown in Table \ref{table:Uncertainty_case}.

\begin{table}
	\centering
	\caption{Standard deviations of process and measurement noises}
	\label{table:Uncertainty_case}
	\begin{tabular}{|c|c|c|}
		\hline 
		Uncertainty case & Process noise (w(t)) & Measurement noise (v(t))\\ 
		\hline 
		1 & 0 & 0 \\ 
		\hline 
		2 & 0 & 0.002  \\ 
		\hline 
		3 & 0.1 & 0.1  \\ 
		\hline
		4 & 0.1 & 1  \\ 
		\hline 
	\end{tabular} 
\end{table}


\subsection{Without Meal Intake}
This part describes the simulation results during the fasting period (without meal intake). The purpose of the simulation is to compare the performances of the robust RL algorithm with the conventional optimal RL algorithm in the nominal condition (uncertainty case 1). The initial blood glucose for both scenario was set at 290 mg/dL and the target blood glucose is 90 mg/dL. The initial policy at the beginning of the simulation was chosen as follows: 
\begin{equation}
{u}(t)=-K_0x(t)= -0.27g(t) + 266.00\chi(t)
\end{equation}

\begin{figure}[h]
	\centering
	\includegraphics[width=0.8\linewidth]{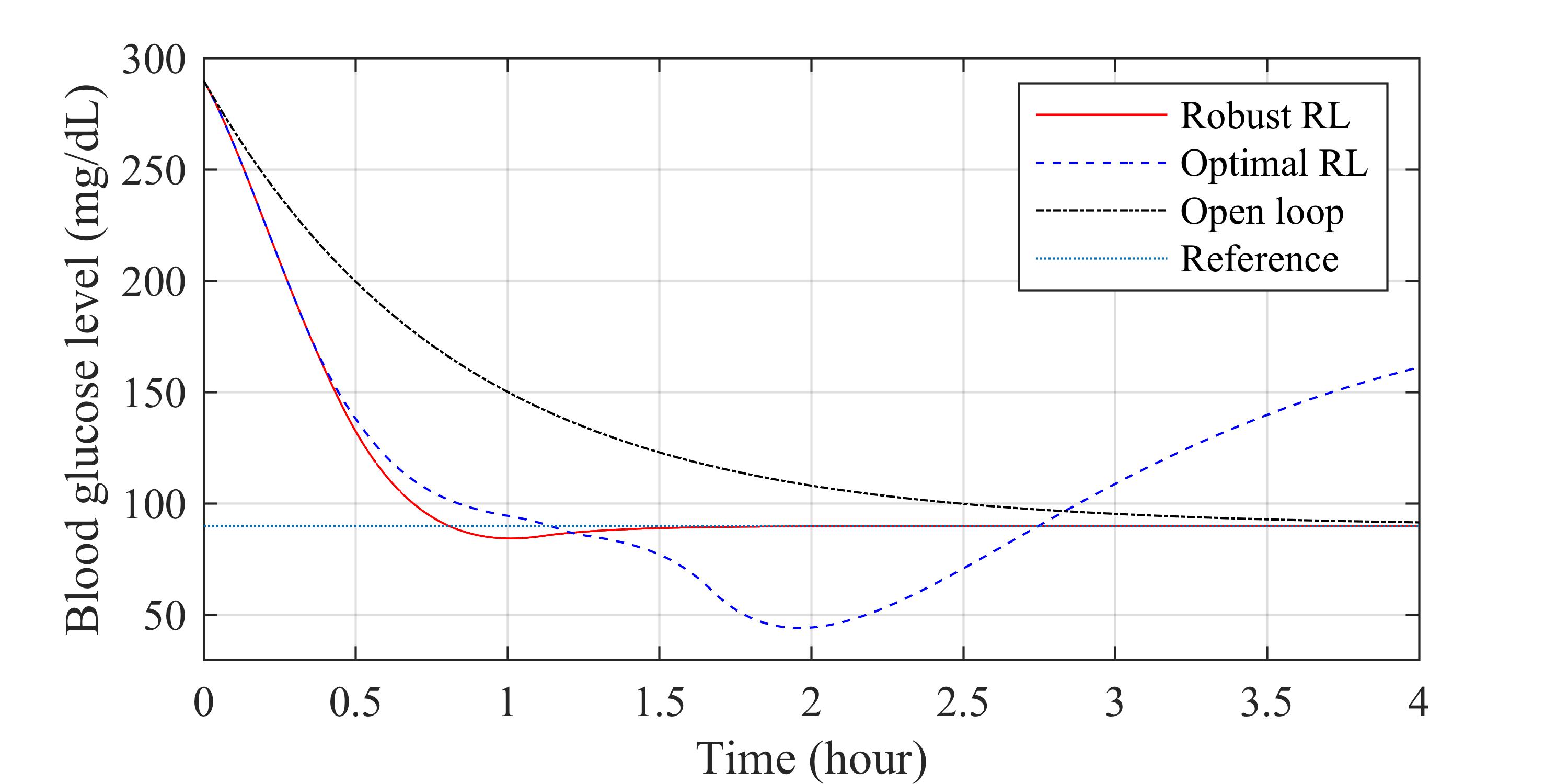}
	\caption{Comparison of blood glucose responses in nominal case without meal intake}
	\label{fig:BG_nominal_4h}
\end{figure}

\begin{figure}[h]
	\centering
	\includegraphics[width=0.8\linewidth]{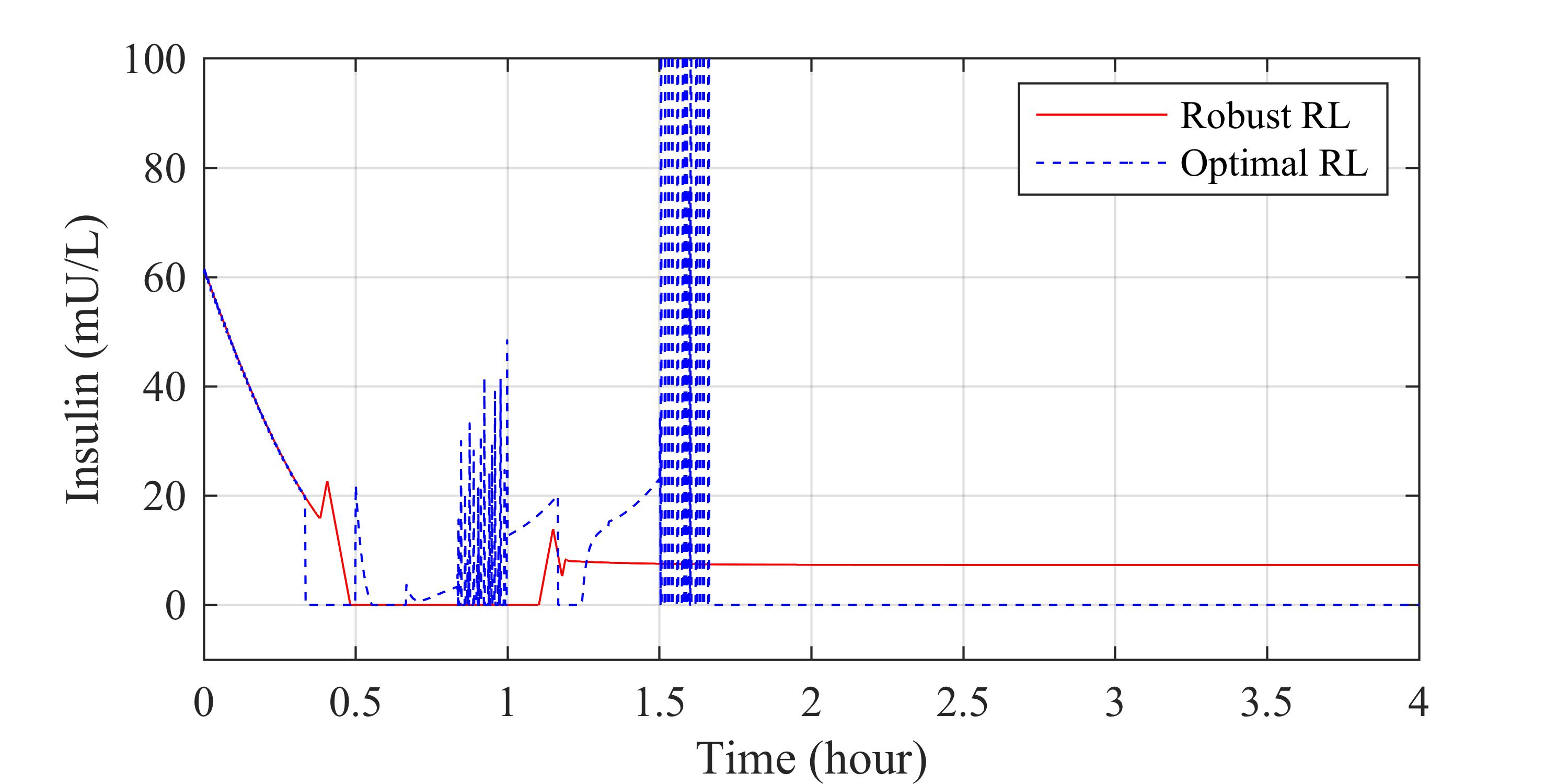}
	\caption{Comparison of insulin concentration in nominal case without meal intake}
	\label{fig:insulin_nominal_4h}
\end{figure}

Fig. \ref{fig:BG_nominal_4h} shows the comparison in blood glucose level between the robust RL and the optimal RL algorithm in the nominal condition. From the results, it can be seen that the robust RL successfully reduces the blood glucose level while the optimal RL becomes unstable when the blood glucose approaches the desired value. The instablity of the optimal RL in this case can be explained by the nonlinearity of the system (due to the coupling term $\chi(t)g(t)$ in Eq. \ref{Eq:BG_evolution}), the saturation of the insulin concentration (Eq. \ref{Eq:insulin_sturation}), and the lack of perturbed data when the blood glucose approaches the steady state value. The insulin concentration during the simulation can be found in Fig. \ref{fig:insulin_nominal_4h}. In this figure, the dotted blue line indicates the unstable insulin profile.

\begin{figure}[h]
	\centering
	\includegraphics[width=0.8\linewidth]{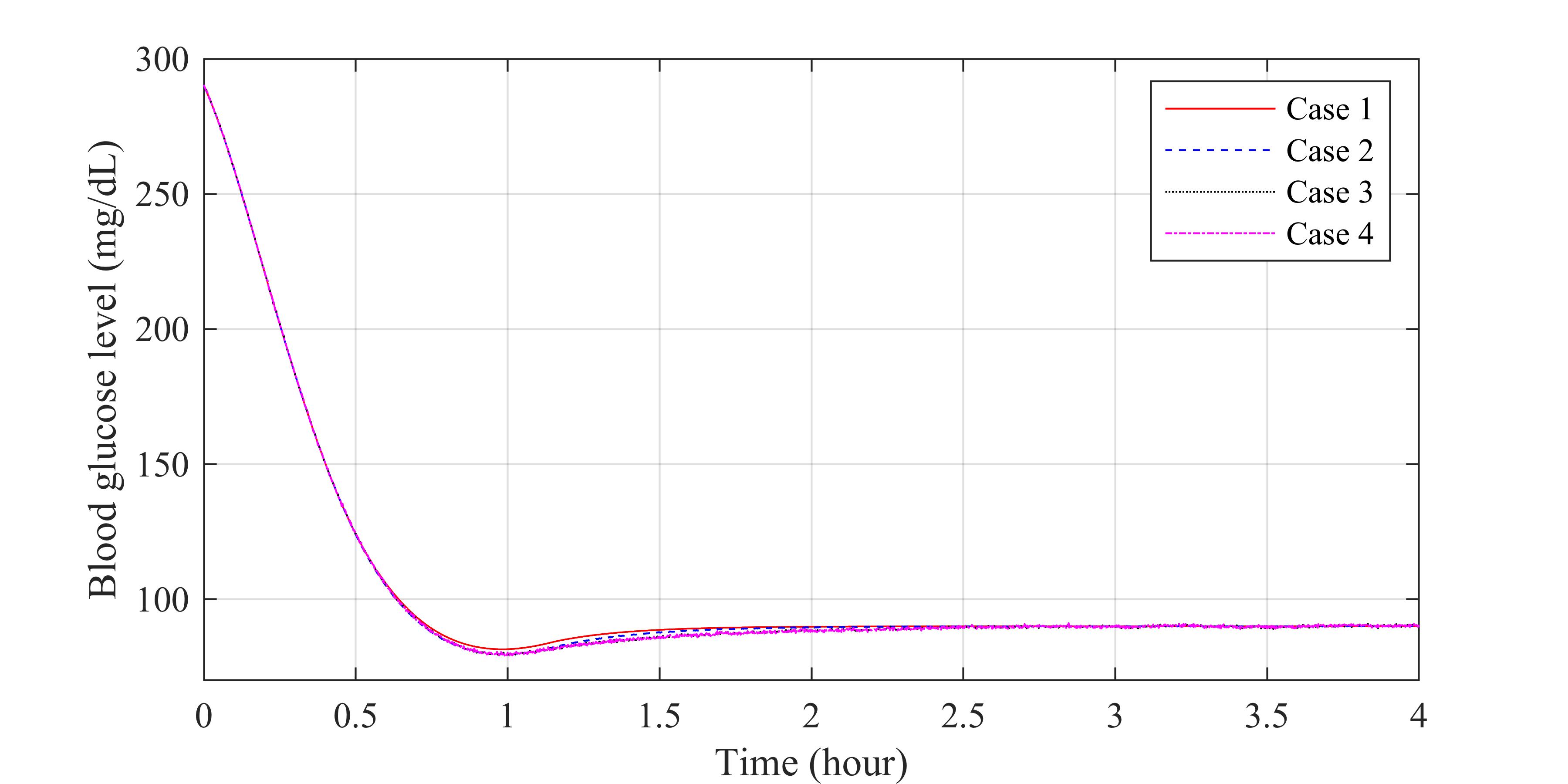}
	\caption{Comparison of blood glucose responses in uncertain cases without meal intake}
	\label{fig:BG_uncertain_4h}
\end{figure}

Fig. \ref{fig:BG_uncertain_4h} shows the blood glucose responses from the robust RL in different uncertain conditions without meal intake. The results show similar and stable responses in all the uncertain conditions with settling time to the desired blood glucose level of approximately 45 minutes. 
The insulin concentration and the update of controller gains can be found in Fig. \ref{fig:insulin_uncertain_4h} and Fig. \ref{fig:K}.

\begin{figure}[h]
		\centering
	\includegraphics[width=0.8\linewidth]{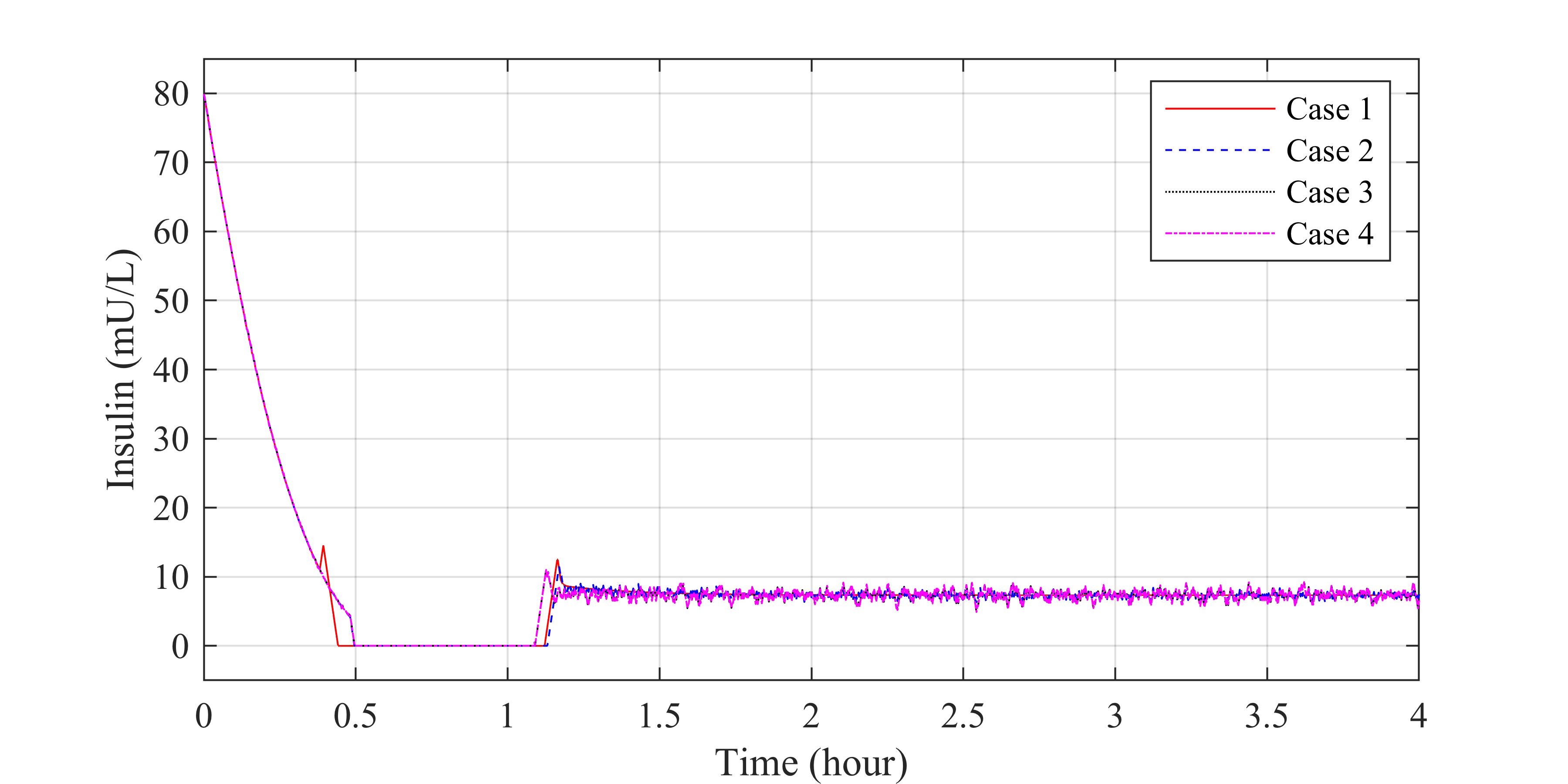}
	\caption{Insulin concentration in uncertain cases without meal intake}
	\label{fig:insulin_uncertain_4h}
\end{figure}

\begin{figure}[h]
		\centering
	\includegraphics[width=0.8\linewidth]{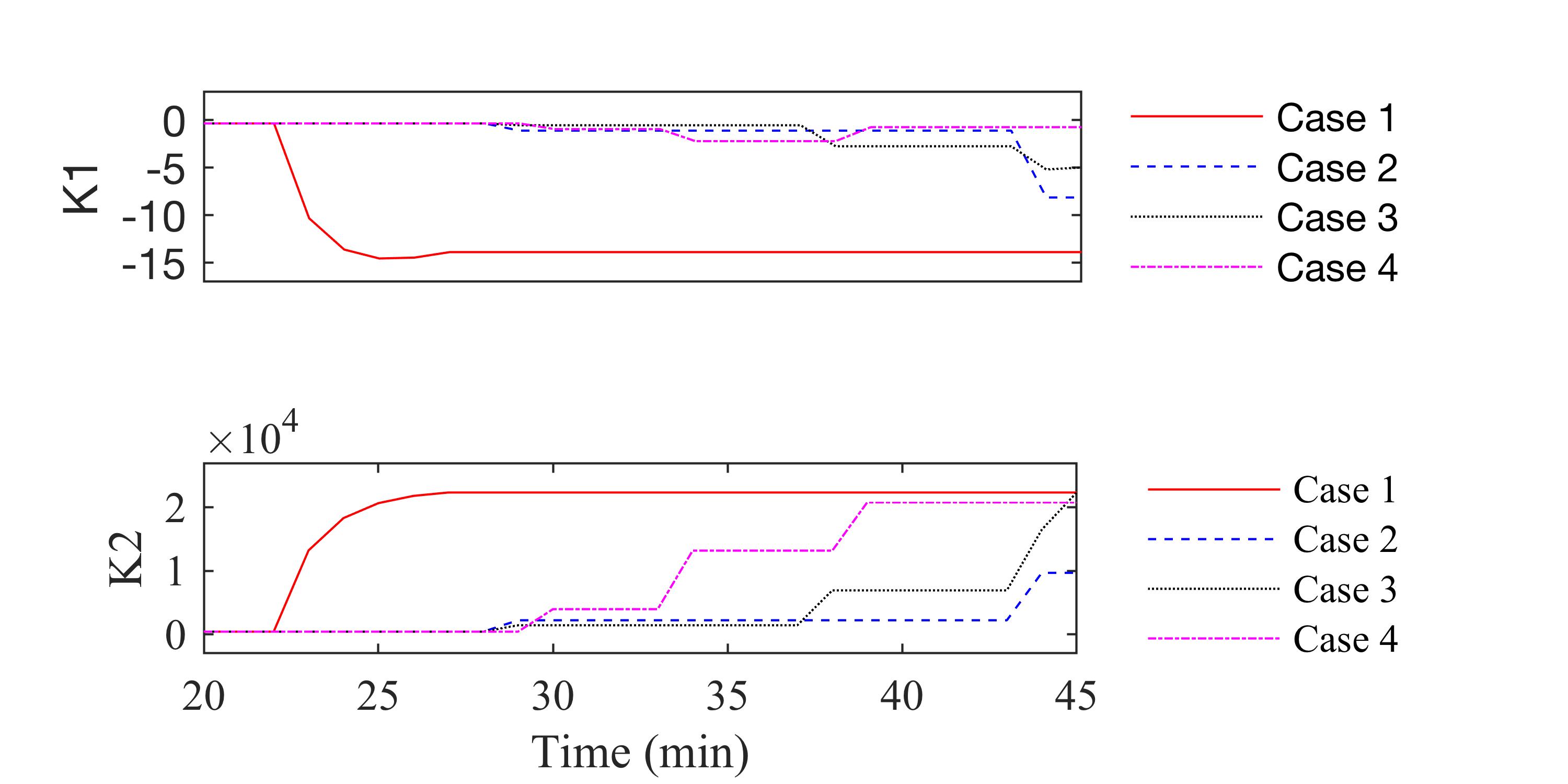}
	\caption{Update of controller gains during the learning process ($K1$ and $K2$ represent the first and second element of the controller gain vector $K$)}
	\label{fig:K}
\end{figure}


\subsection{With Meal Intake}
In this part, the performance of the robust RL controller was tested where the system is subjected to meal intakes with the carbohydrate profile as shown in Fig. \ref{fig:CHO}.
\begin{figure}[h]
		\centering
	\includegraphics[scale=0.1]{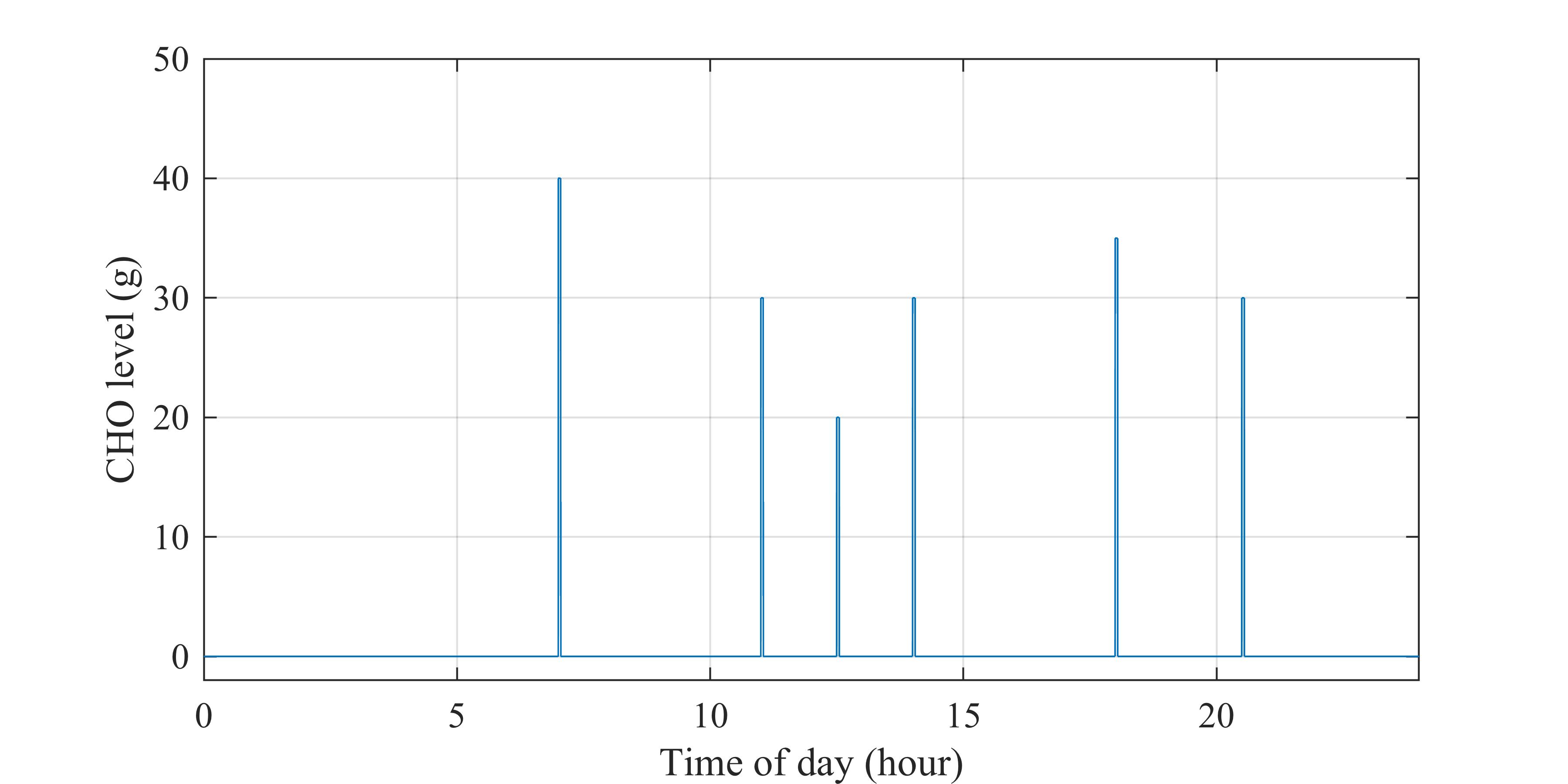}
	\caption{Carbohydrate intake per meal}
	\label{fig:CHO}
\end{figure}

During the simulation period with meal intakes, blood glucose responses throughout the day of the robust RL control systems under four uncertain cases
are shown in Fig. \ref{fig:BG_uncertain_1day}. The insulin concentration during the process can also be found in Fig. \ref{fig:insulin_uncertain_1day}. The results show that the controller provides the most aggressive action under case 1 (no uncertainty) and the least aggressive action under case 4 (with highest level of measurement and process noises). This leads to the largest and smallest reduction of postprandial blood glucose in case 1 and case 4, respectively. Most importantly, the robust RL algorithm kept the system in stable condition and there is no hypoglycemia event during the simulation for all four cases under different level of uncertainties.

\begin{figure}[h]
	\centering
	\includegraphics[width=0.8\linewidth]{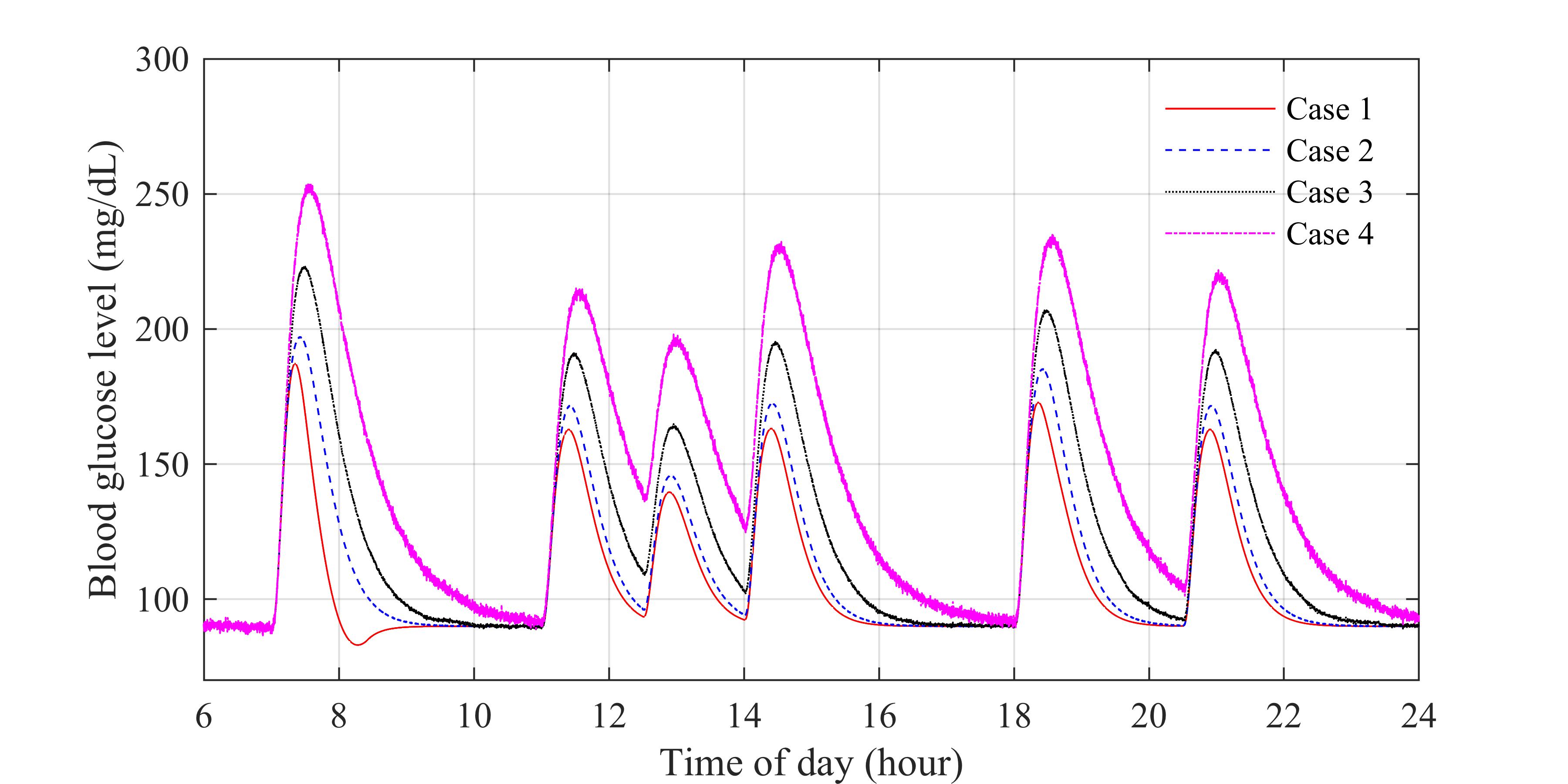}
	\caption{Blood glucose responses in simulation with meals}
	\label{fig:BG_uncertain_1day}
\end{figure}

\begin{figure}[h]
	\centering
	\includegraphics[width=0.8\linewidth]{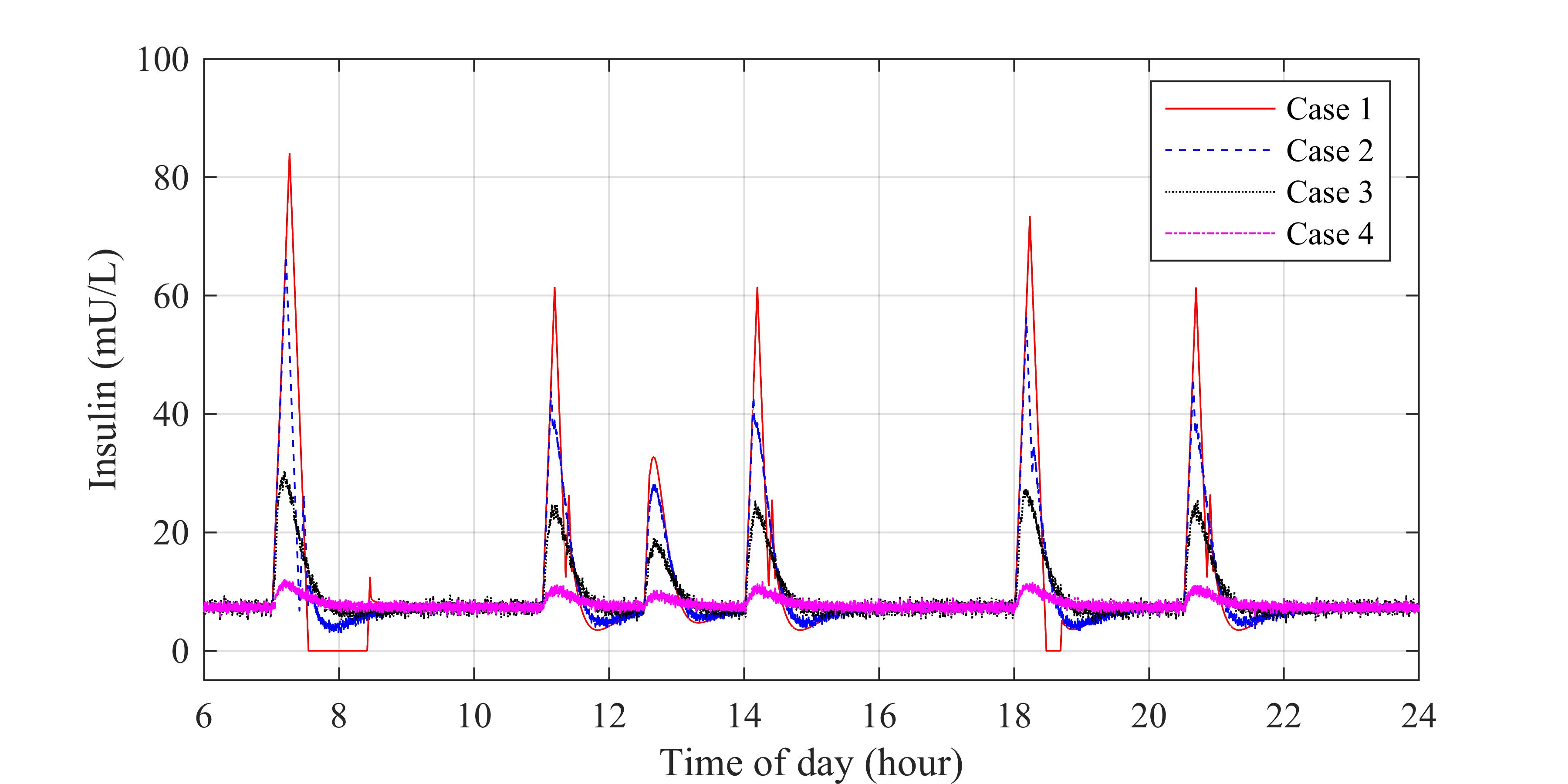}
	\caption{Insulin concentration in simulation with meals}
	\label{fig:insulin_uncertain_1day}
\end{figure}

\section{Conclusion}
The paper proposes a robust reinforcement learning algorithm for dynamical systems with uncertainties. The uncertainties can be approximated by the critic and represented in the value function. LMI techniques were used to improve the controller gain. The algorithm was simulated on a blood glucose model for patients with type-1 diabetes. The objective of the simulation is to control and maintain a healthy blood glucose level. The comparison between the robust RL algorithm and the optimal RL algorithm shows a significant improvement in the robustness of the proposed algorithm. Simulation results show that the algorithm successfully regulated the blood glucose and kept the system stable under different levels of uncertainty.


%

\section*{Acknowledgment}
The research was funded by Tromsø Research Foundation.

\clearpage
\bibliographystyle{IEEEtran}
\bibliography{library}

\end{document}